\def\BibTeX{{\rm B\kern-.05em{\sc i\kern-.025em b}\kern-.08em
    T\kern-.1667em\lower.7ex\hbox{E}\kern-.125emX}}
  \newcommand{\figref}[1]{Fig.~\protect\ref{#1}}
                            \newcommand{\pto}{\overset{P}\longrightarrow }
\long\def\comment#1{}
\DeclareMathOperator*{\argmin}{arg\,min}
\newfont{\bbb}{msbm10 scaled 700}
\newfont{\bb}{msbm10 scaled 1100}
\newcommand{\av}{{\boldsymbol a}}
\newcommand{\bv}{{\boldsymbol b}}
\newcommand{\gv}{{\boldsymbol g}}
\newcommand{\rv}{{\boldsymbol r}}
\newcommand{\sv}{{\boldsymbol s}}
\newcommand{\uv}{{\boldsymbol u}}
\newcommand{\wv}{{\boldsymbol w}}
\newcommand{\vv}{{\boldsymbol v}}
\newcommand{\xv}{{\boldsymbol x}}
\newcommand{\Am}{{\boldsymbol A}}
\newcommand{\Cm}{{\boldsymbol C}}
\newcommand{\Hm}{{\boldsymbol H}}
\renewcommand{\arg}{{\hbox{arg}}}
\newtheorem{theorem}{Theorem}
\newtheorem{remark}{Remark}
\begin{document}

\title{Performance Analysis of Regularized Convex Relaxation for Complex-Valued Data Detection}

\author{Ayed M. Alrashdi, and Houssem Sifaou
\thanks{A. M. Alrashdi is with the Department of Electrical Engineering, College of Engineering, University of Ha'il, P.O. Box 2440, Ha'il, 81441, Saudi Arabia (e-mail: am.alrashdi@uoh.edu.sa).}
\thanks{H. Sifaou is with Department of Electrical and Electronic Engineering, Imperial College London, London SW7 2AZ, U.K. (e-mail: h.sifaou@imperial.ac.uk).}
}

\maketitle

\begin{abstract}
In this work, we study complex-valued data detection performance in massive multiple-input multiple-output (MIMO) systems. We focus on the problem of recovering an $n$-dimensional signal whose entries are drawn from an arbitrary constellation $\mathcal{K} \subset \mathbb{C}$ from $m$ noisy linear measurements, with an independent and identically distributed (i.i.d.) complex Gaussian channel.
Since the optimal maximum likelihood (ML) detector is computationally prohibitive for large dimensions,
many convex relaxation heuristic methods have been proposed to solve the detection problem.
In this paper, we consider a regularized version of this convex relaxation that we call the regularized convex relaxation (RCR) detector and sharply derive asymptotic expressions for its mean square error and symbol error probability. 
Monte-Carlo simulations are provided to validate the derived analytical results.
\end{abstract}
\begin{IEEEkeywords}
Asymptotic analysis, massive MIMO, mean square error, probability of error, convex relaxation, regularization.
\end{IEEEkeywords}

\IEEEpeerreviewmaketitle

\section{Introduction}
 \label{sec:introd}
%
\IEEEPARstart{D}{etection} of complex-valued data from noisy linear measurements appears often in many communication applications, such as massive multiple-input multiple-output (MIMO) signal detection \cite{ngo2013energy}, multiuser detection \cite{verdu1998multiuser}, etc.
In this work, we consider the problem of recovering an $n$-dimensional complex-valued signal whose entries are drawn from an arbitrary constellation $\mathcal{K} \subset \mathbb{C}$ from $m$ noisy linear measurements in a massive MIMO application.
Although the maximum likelihood (ML) detector can achieve excellent performance, its computational complexity becomes prohibitive as the problem size increases \cite{verdu1998multiuser}.  To achieve an acceptable performance with a low computing complexity,  various convex optimization-based heuristics have been developed.
One such heuristic method is to relax the discrete set $\mathcal{K}$ to a convex and continuous set $\mathcal{V}$, and then solve the detection problem using a regularized convex optimization program followed by hard-thresholding. We call this method the \emph{regularized convex relaxation} (RCR) detector, and we sharply analyze its asymptotic performance in terms of its mean square error (MSE) and symbol error probability (SEP) in the limit of $m, n \to \infty$ with a fixed rate. We assume an independent and identically distributed (i.i.d.) \textit{complex} Gaussian channel matrix and additive white Gaussian noise. 
The technical tool used in the analysis is the convex Gaussian min-max theorem (CGMT) framework \cite{thrampoulidis2016precise, stojnic2013framework}.

The CGMT has been used to analyze the error performance of many optimization problems \cite{thrampoulidis2016precise, thrampoulidis2018symbol, thrampoulidis2015regularized, atitallah2017box, atitallah2017ber, alrashdi2020optimum, alrashdi2019precise}. However, prior performance analysis was only established for real-valued constellations such as BPSK and PAM using the box-relaxation method \cite{thrampoulidis2018symbol}.
For the complex-valued constellations,  to the best of our knowledge, the CGMT was only applied in one work \cite{abbasi2019performance} which considered convex-relaxation performance, but without regularization. 

Our main contribution is to extend these performance results to complex-valued constellations with regularized detectors,  and show the additional performance gains attained by adding the convex relaxation constraint. 
Our results also enable us to select the optimal regularization factor to further improve the detection performance.
As a concrete example, we focus our attention on studying the performance of the RCR detector for phase-shift keying (PSK) and quadrature amplitude modulation (QAM) constellations. 
\section{Problem Formulation}
\subsection{Notation}
The basic notations used throughout this article are gathered here.
For a complex scalar $z \in \mathbb{C}$, $z_{\mathsmaller{R}}$, and $z_{\mathsmaller{I}}$ represent the real and imaginary parts of $z$, respectively and $|z| = \sqrt{z_{\mathsmaller{R}}^2+z_{\mathsmaller{I}}^2}$.  We use the letter $j$ to denote the complex unit, i.e., $j^2 = -1$.  A real Gaussian distribution with mean $\mu$ and variance $\sigma^2$ is denoted by $\mathcal{N}(\mu,\sigma^2)$.
Similarly,  $\mathcal{CN}(\mu,\sigma^2)$ denotes a complex Gaussian distribution with real
and imaginary parts drawn independently from $\mathcal{N}(\mu_{\mathsmaller{R}}, \sigma^2/2)$ and $\mathcal{N}(\mu_{\mathsmaller{I}}, \sigma^2/2)$ respectively. $X \sim p_X$ implies that the random variable $X$ has a density $p_X$. 
Bold lower-case letters are reserved for vectors, e.g., $\xv$, with $x_i$ denoting its $i$-th entry. 
The Euclidean norm of a vector is denoted by $\|\cdot\|$. 
Matrices are represented by bold upper-case letters, e.g., $\Am$, while $(\cdot)^{\top}$ represents the transpose operator.
We reserve the letters $G$ and $Z$ to denote real standard normal random variables. Similarly, $G_{c}$ is reserved
to denote a complex $\mathcal{CN}(0, 2)$ normal random variable. $\mathbb{E}[\cdot]$ and $\mathbb{P}[\cdot]$ denote the expectation and probability operators, respectively.
Notations ``${\overset{d} =}$" and ``$\pto$" are used to denote equivalence in distribution, and convergence in probability, respectively.
Finally, for a closed and nonempty convex set $\mathcal{V} \subset \mathbb{C}$, and for any vector $\xv \in \mathbb{C}^n$, we define its \textit{distance} and \textit{projection} functions, respectively, as follows
\begin{align}
{\mathcal{D}}(\xv;\mathcal{V}) = \min_{\av \in \mathcal{V}^n} \| \xv - \av  \|,
\end{align}
\begin{align}
{\mathbf{\Pi}}(\xv;\mathcal{V}) = \arg \min_{\av \in \mathcal{V}^n} \| \xv  - \av \|.
\end{align}
\subsection{Problem Setup}\label{sec:setup}
We need to recover an $n$-dimensional complex-valued transmit vector $\sv_{\rm{0}} \in \mathcal{K}^n \subset \mathbb{C}^n$, where $\mathcal{K}$ is the discrete transmit modulation constellation (e.g., PSK, QAM, etc.). The received signal vector $\rv \in \mathbb{C}^m$ is given by
\begin{align}
\rv =\Hm \sv_0 +\vv,
\end{align}
where $\Hm \in \mathbb{C}^{m \times n}$ is the MIMO channel matrix that has $\mathcal{C N} (0,\frac{1}{n})$ i.i.d. entries and $\vv \in \mathbb{C}^m$ is the noise vector with $\mathcal{C N} (0,\sigma^2)$ i.i.d. entries.
Under the current setup, the signal-to-noise ratio (SNR) is $\mathrm{SNR}$ $=1/\sigma^2$.\\
\textbf{Detector}: The optimum detector that minimizes the probability of error is the maximum likelihood (ML) detector which solves
\begin{align}
\widehat{\sv}_{\rm{\tiny{ML}}}: = {\rm{arg}} \min_{\sv \in \mathcal{K}^n} \frac{1}{2}\| \Hm \sv -  \rv \|^2.
\end{align}
This is computationally prohibitive in our massive MIMO setup; due to the discreteness nature of the constraints set $\mathcal{K}$. Instead, in this paper, we consider the \textbf{regularized convex relaxation (RCR)} detector. The RCR recovers $\sv$ following the next two steps:
\begin{subequations}\label{eq:BRO}
\begin{align}
& \widehat{\sv}: = \arg \min_{\sv \in \mathcal{V}^n} \frac{1}{2}\|  \Hm \sv -  \rv  \|^2 + \frac{\zeta}{2} \| \sv \|^2,\label{a}  \\
& s_i^{\star}: = \arg \min_{c \in \mathcal{K}} | c - \widehat{s}_i|,\label{b} \quad \quad i =1,2,\cdots,n,
\end{align}
\end{subequations}
where in the first step \eqref{a}, the discrete set $\mathcal{K}$ is relaxed to a \textit{convex} set $\mathcal{V}$, and then we solve a regularized version of this relaxed problem with $\zeta \geq 0 $ being the regularization factor.
In the second step \eqref{b}, each entry of $\widehat{\sv}$ is mapped to its closest point in $\mathcal{K}$ to produce the final estimate $\sv^{\star}$.
%
%
\subsection{Performance Measures}
In this paper, we provide sharp performance analysis of the RCR detector as a function of the problem parameters such as $\rm SNR$, $m,n$, $\mathcal{K}$ and $\mathcal{V}$. We will consider two different performance measures, namely the MSE and the SEP discussed next.

\noindent
{\bf{Mean Square Error (MSE)}}: This measures the performance of the \textit{estimation} step of the detector (first step in \eqref{a}) and is defined as:
\begin{align}
{\rm{MSE}} := \frac{1}{n} \| \sv_0 -\widehat{\sv} \|^2.
\end{align}
Another important performance metric is the symbol error probability.\\
\textbf{Symbol Error Probability (SEP)}: the symbol error rate (SER) characterizes the performance of the \textit{detection} process (second step \eqref{b}) and is defined as:
\begin{equation}
{\rm{SER}} := \frac{1}{n} \sum_{i=1}^{n} {\mathbb{1}}_{\{{s}^\star_i \neq s_{0,i} \}},
\end{equation}
where ${\mathbb{1}}_{\{\cdot\}}$ represents the indicator function.\\
In relation to the SER is the symbol error probability (SEP) which is defined as the expectation of the SER averaged over the noise, the channel and the constellation. Formally, the symbol error probability denoted by ${\rm SEP}$ is given by:
\begin{equation}
	{\rm SEP}:=\mathbb{E}[{\rm SER}] =\frac{1}{n}\sum_{i=1}^n \mathbb{P}\left[{s}^\star_i \neq s_{0,i} \right].
	\label{eq:SEP}
\end{equation}
Next, we introduce the notation $\mathcal{V}_x$ for $x \in \mathcal{K}$, as the set of all points in $\mathcal{V}$ that will be mapped to $x$ in \eqref{b}. Equivalently
\begin{equation}
\mathcal{V}_x : = \big\{ b \in \mathcal{V}: \forall a \in \mathcal{K}, | b-x|< |b-a| \big\}.
\end{equation}
With this notation at hand, we can rewrite the $\rm SEP$ in \eqref{eq:SEP} as
\begin{equation}
{\rm SEP} = \frac{1}{n}\sum_{i=1}^n \mathbb{P}\left[ \widehat{s}_i \notin \mathcal{V}_{s_{0,i}} \right],
\label{eq:SEP2}
\end{equation}
where $\widehat{s}_i$ is a minimizer of \eqref{a}.
\subsection{Technical Assumptions}\label{sec:assumptions}
We assume that the entries of $\sv_0$ are sampled i.i.d. from a density $p_{s_0}$, with $\mathbb{P}[b_1+j b_2] = \mathbb{P}[b_2+j b_1] \quad \forall b_1, b_2 \in \mathbb{R}$. Furthermore, we assume that $\sv_0$ is normalized to have zero-mean and unit-variance, i.e., $\mathbb{E}[S_0^2] =1$. The convex set $\mathcal{V}$ is assumed to be symmetric, i.e., if $(b_1+j b_2) \in \mathcal{V}$, then $(b_2+j b_1) \in \mathcal{V}$ as well.
Finally, we assume a high-dimensional regime in which $m, n \to \infty$ with a proportional rate $\kappa:=\frac{m}{n} \in (0,\infty)$.
\section{Asymptotic Performance Analysis}
\label{sec:Results}
\subsection{Main Results}
This section provides our main results on the performance evaluation of the RCR detector in the considered high dimensional setting. 
%
\begin{theorem}[Asymptotics of the RCR] \label{Theorem:BER}
Let $\rm MSE$, and $\rm SEP$ be the mean square error and symbol error probability of the RCR detector in \eqref{eq:BRO}, respectively, for an unknown signal $\sv_0 \in \mathcal{K}^n$ with entries sampled i.i.d.  from a distribution $p_{s_0}$. Let $\mathcal{V}$ be a convex relaxation of $\mathcal{K}$ that satisfies the assumption of Section \ref{sec:assumptions}. For fixed $\zeta\geq 0$, and $\kappa>0$, if following optimization problem
\begin{align}\label{min-max:eq}
\min_{\alpha>0} \max_{\beta>0}& \ {\kappa \alpha \beta} + \frac{\sigma^2 \beta}{2 \alpha} -\! \frac{\beta^2}{2} \!-\! \frac{\alpha \beta^2}{ \beta +2 \zeta \alpha} \!+\!  \left( \!\frac{\beta}{2 \alpha}\! - \!\frac{\beta^2}{{2 \alpha\beta}+4\zeta \alpha^2} \!\right)\nonumber \\
& \!+\! \left( \!\frac{\beta}{2 \alpha} \!+\!\zeta \!\right)\! \mathbb{E}\left[ {\mathcal{D}}^2\left(\frac{{\beta}}{\beta+2\zeta \alpha} \left(S_0 - \alpha G_c \right); \mathcal{V}\right) \right]
\end{align}
has a unique solution $(\alpha_*, \beta_*)$, then it holds in probability that
\begin{equation}\label{eq:mse}
\lim_{n \to \infty}{\rm MSE}= 2 \kappa \alpha_*^2 - \sigma^2,
\end{equation}
and
\begin{equation}\label{sep3}
\lim_{n \to \infty}{\rm SEP} \!=\! \mathbb{P}\! \left[ \!{\mathbf \Pi} \!\left( \! \frac{{\beta_*}}{\beta_*+2\zeta \alpha_*} \!\left(S_0 \! - \! \alpha_* G_c \!\right);\! \mathcal{V} \!\right) \! \notin \! \mathcal{V}_{S_0} \!\right]\!,
\end{equation}
where the expectation and probability in the above expressions are taken over $S_0 \sim p_{s_0}$ and $G_c \sim \mathcal{CN}(0,2)$.
\end{theorem}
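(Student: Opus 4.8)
\emph{Proof strategy.} The plan is to establish both limits via the complex version of the convex Gaussian min--max theorem (CGMT) \cite{thrampoulidis2016precise,abbasi2019performance}, following the program that \cite{thrampoulidis2018symbol} carries out for real box--relaxation detectors. First I would pass to the error vector $\wv:=\sv-\sv_0$, so that $\Hm\sv-\rv=\Hm\wv-\vv$ and the estimation step \eqref{a} becomes the \emph{primary optimization}
\[
\min_{\wv:\,\sv_0+\wv\in\mathcal{V}^n}\ \tfrac12\|\Hm\wv-\vv\|^2+\tfrac\zeta2\|\sv_0+\wv\|^2 .
\]
Writing the quadratic loss through its conjugate, $\tfrac12\|\av\|^2=\max_{\uv}\{\operatorname{Re}(\uv^\herm\av)-\tfrac12\|\uv\|^2\}$, makes this a min--max problem that is convex in $\wv$ over the convex feasible set, concave in $\uv$, and bilinear in $(\uv,\wv)$ through $\operatorname{Re}(\uv^\herm\Hm\wv)$. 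After isolating the deterministic term $\operatorname{Re}(\uv^\herm\vv)$, the complex CGMT replaces this primary problem --- for the purpose of its limiting value and of the limiting empirical distribution of its minimizer --- by the \emph{auxiliary optimization}
\[
\min_{\wv:\,\sv_0+\wv\in\mathcal{V}^n}\ \max_{\uv}\ \tfrac{\|\wv\|}{\sqrt n}\operatorname{Re}(\gv^\herm\uv)+\tfrac{\|\uv\|}{\sqrt n}\operatorname{Re}(\hv^\herm\wv)-\operatorname{Re}(\uv^\herm\vv)-\tfrac12\|\uv\|^2+\tfrac\zeta2\|\sv_0+\wv\|^2 ,
\]
where $\gv\in\mathbb{C}^m$ and $\hv\in\mathbb{C}^n$ have independent complex Gaussian entries and the $1/\sqrt n$ comes from the $\mathcal{CN}(0,1/n)$ normalization of $\Hm$.

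Next I would reduce the auxiliary problem to a deterministic scalar optimization. Optimizing over the direction of $\uv$ at fixed magnitude $\|\uv\|=\rho$ collapses the $\gv$-term and the noise term into $\rho\,\|\tfrac{\|\wv\|}{\sqrt n}\gv-\vv\|$; maximizing over $\rho\ge0$ via $\max_{\rho\ge0}(\rho t-\rho^2/2)=t_+^2/2$, introducing a scalar dual variable $\beta$ for the resulting squared term, and linearizing the remaining norm through $\|\bv\|=\min_{\gamma>0}(\tfrac\gamma2+\tfrac1{2\gamma}\|\bv\|^2)$ yields --- after substituting the concentration limits $\tfrac1n\|\gv\|^2\to\kappa$, $\tfrac1n\|\vv\|^2\to\kappa\sigma^2$, $\tfrac1n\operatorname{Re}(\gv^\herm\vv)\to0$, exchanging the order of the remaining optimizations by convexity--concavity, and rescaling the dual variables so that the quantities involved are $\Theta(1)$ (with $\alpha$ the rescaled effective-noise/error-norm parameter) --- an objective whose dependence on $\wv$ is only through $\tfrac1n\|\wv\|^2$, through the separable term $\operatorname{Re}(\hv^\herm\wv)/\sqrt n$, and through the separable regularizer $\tfrac\zeta2\|\sv_0+\wv\|^2$. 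The per-coordinate minimization in $s_i:=s_{0,i}+w_i$ then has the schematic form $\min_{s_i\in\mathcal V}a\,|s_i-(s_{0,i}-\alpha h_i')|^2+c\,|s_i|^2$ with $a,c$ explicit functions of $(\alpha,\beta,\zeta)$; completing the square turns its value into $(a+c)\,\mathcal D^2\!\big(\tfrac{a}{a+c}(s_{0,i}-\alpha h_i');\mathcal V\big)$ plus a deterministic constant, and a computation identifies $\tfrac{a}{a+c}=\tfrac{\beta}{\beta+2\zeta\alpha}$ (the shrinkage factor) and $a+c=\tfrac{\beta}{2\alpha}+\zeta$. A law of large numbers over $i=1,\dots,n$ then turns the per-coordinate Gaussians $h_i'$ into the scalar channel $S_0-\alpha G_c$ with $G_c\sim\mathcal{CN}(0,2)$ and the empirical average into $\mathbb E[\mathcal D^2(\cdots;\mathcal V)]$, while the noise term $\sigma^2\beta/(2\alpha)$, the rational terms $-\alpha\beta^2/(\beta+2\zeta\alpha)$ and $\tfrac{\beta}{2\alpha}-\tfrac{\beta^2}{2\alpha\beta+4\zeta\alpha^2}$, and the terms $\kappa\alpha\beta$ and $-\beta^2/2$ are the deterministic residues of the maximizations and completions of the square. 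The auxiliary objective thus concentrates, uniformly on compact sets, to the function minimized--maximized in \eqref{min-max:eq}.

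Given that \eqref{min-max:eq} has a unique saddle point $(\alpha_*,\beta_*)$, it remains to transfer the two quantities of interest back to the primary problem. For the MSE I would use the deviation part of the CGMT: an event cut out by Lipschitz constraints that is strictly suboptimal for the auxiliary problem is, with high probability, avoided by the primary minimizer; applied to $\{\,|\tfrac1n\|\wv\|^2-(2\kappa\alpha_*^2-\sigma^2)|>\epsilon\,\}$, together with the fact that the partially-optimized auxiliary objective, viewed as a function of the error energy $\tfrac1n\|\wv\|^2$, has a unique minimum at the value $2\kappa\alpha_*^2-\sigma^2$, this gives \eqref{eq:mse}. For the SEP, which is not a function of the optimal value, I would sandwich the indicator $b\mapsto\mathbf 1\{b\notin\mathcal V_{s_{0,i}}\}$ between Lipschitz functions, apply the CGMT to the correspondingly perturbed objectives, and let the Lipschitz approximations tighten, concluding that the empirical distribution of the entries of $\widehat\sv$ converges to the law of $\mathbf\Pi\!\big(\tfrac{\beta_*}{\beta_*+2\zeta\alpha_*}(S_0-\alpha_*G_c);\mathcal V\big)$; this, with \eqref{eq:SEP2}, yields \eqref{sep3}.

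The main obstacle is twofold. First, the CGMT is classically stated for matrices with i.i.d.\ \emph{real} Gaussian entries, while here the relevant bilinear form is $\operatorname{Re}(\uv^\herm\Hm\wv)$ with $\Hm$ complex; in real coordinates this is governed by a structured (block, non-i.i.d.) real Gaussian matrix, so one has to verify that Gordon's comparison inequality --- and with it the whole CGMT apparatus: the convexity--concavity hypotheses, boundedness of (or a localization argument for) the feasible sets for $\wv$ and $\uv$, and the uniform deviation estimate --- still applies in this setting, as was done for the unregularized detector in \cite{abbasi2019performance}. Second, the scalarization bookkeeping: carrying the regularizer $\tfrac\zeta2\|\sv_0+\wv\|^2$ through the chain of min--max reductions without destroying convexity, tracking the powers of $n$ correctly, and checking that the completions of the square reproduce \emph{exactly} the shrinkage factor and the rational terms of \eqref{min-max:eq}, is where the computation is most error-prone (though ultimately routine). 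A subsidiary point, needed for both conclusions, is to establish that \eqref{min-max:eq} indeed has a unique saddle point; this should follow from the convexity in $\alpha$ and concavity in $\beta$ inherited from the CGMT reduction, together with strict convexity in the relevant direction under the stated assumptions.
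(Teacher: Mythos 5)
Your proposal follows essentially the same route as the paper: pass to the error vector, apply the CGMT, scalarize the auxiliary problem via the square-root identity and a per-coordinate completion of squares to obtain \eqref{min-max:eq}, then recover the MSE from the error-energy variable and the SEP from the limiting empirical distribution of the entries of $\widehat{\sv}$. The only difference is in handling the complex channel: where you propose verifying a complex/structured Gordon inequality directly, the paper rewrites the problem in real block coordinates and invokes a Lindeberg universality argument (following the references you cite) to replace the dependent block Gaussian matrix by an i.i.d.\ real one before applying the standard real CGMT.
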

\begin{proof}
See Section \ref{Proof}.
\end{proof}
Theorem \ref{Theorem:BER} provides high dimensional asymptotic expressions to calculate the MSE/SEP of the RCR detector under an arbitrary complex-valued constellation.
%
\begin{remark} The objective function in \eqref{min-max:eq} is convex-concave and only involves scalar variables, thus $\alpha_*$ and $\beta_*$ can be efficiently numerically calculated using the first-order optimality conditions. 
\end{remark}
\subsection{Modulation Schemes}
Using Theorem~1, we can sharply characterize the performance of the
RCR detector for a general complex-valued constellation $\mathcal{K}$, which can be relaxed to an arbitrary convex set $\mathcal{V}$.
To better understand our result and to show how to apply it to different schemes, we will focus on two conventional schemes; PSK and QAM constellations.
\subsubsection{$M$-PSK Constellation}
In an $M$-PSK constellation, where $M=2^k$, for some $k \in \mathbb{Z}_+$, each entry of $\sv_0$ is randomly drawn from the set $$\mathcal{K}=\left\{\exp\left({\frac{j2 \pi i}{M}} \right) , i =0,1,\cdots,M-1 \right\}, \quad M \geq 4.\footnote{This work focuses on complex-constellations, but the ``$M=2$"-case corresponds to the real-valued BPSK constellation, which has already been studied in \cite{atitallah2017ber}.}$$
The elements of $\mathcal{K}$ are uniformly distributed over the unit circle in the complex plane, and therefore we suggest the use of the so-called \emph{circular relaxation} (CR), where we choose the set $\mathcal{V}^{\rm CR} = \{ x \in \mathbb{C}: |x| \leq 1 \}$ as the convex relaxation set in \eqref{a}. The projection function on this set has the following form:
\begin{align*}
{\mathbf \Pi}(\xv;\mathcal{V}^{\rm CR}) = \begin{cases}
        \xv, \quad \text{if} \quad |\xv| \leq 1
        \\
      \frac{\xv}{|\xv|}, \quad \text{otherwise.}
        \end{cases}
\end{align*}
Due to the symmetric nature of the $M$-PSK constellation, the asymptotic $\rm{SEP}$ can be derived in the following closed-form:
\begin{align}
\lim_{n \to \infty} {\rm{SEP}} = \mathbb{P} \left[ \left| \frac{Z }  {  G - \frac{ 1}{\alpha_* } } \right|  \geq \tan \left( \frac{\pi}{M}\right)\right],
\end{align}
where $Z$ and $G$ are i.i.d. $\mathcal{N}(0,1)$ random variables.
\subsubsection{$M$-QAM Constellation}
In this paper, we will only consider square QAM modulation constellations, where $M = 2^{2k}$,
such as $16$-QAM, $64$-QAM, etc. Then, the constellation set is given by
\begin{align*}
\!\!\!\mathcal{K}\!=\!\left\{\!\! a\!+\! jb\!:\! a,b \! \in\! \left\{ \!\frac{\!-(\sqrt{M}-1)}{\sqrt{E_{\rm avg}}}\!,\!\frac{-(\sqrt{M}-3)}{\sqrt{E_{\rm avg}}}\!,\!\cdots\!, \!\!\frac{\sqrt{M}-1}{\sqrt{E_{\rm avg}}} \! \right\}\! \!\right\}\!,
\end{align*}
\noindent
where we normalize the constellation points by $E_{\rm avg} :=\frac{2(M-1)}{3}$; to have unit average power.\footnote{$E_{\rm avg}$ represents the average power of the non-normalized $M$-QAM symbols.}
The convex relaxation that is often used for this modulation is known as the \emph{Box-relaxation}
(BR) \cite{thrampoulidis2018symbol} which is given as
$$
\mathcal{V}^{\rm BR} = \left\{ (\alpha + j \beta) \in \mathbb{C} : |\alpha| \leq \frac{\sqrt{M}-1}{\sqrt{E_{\rm avg}}}, |\beta| \leq \frac{\sqrt{M}-1}{\sqrt{E_{\rm avg}}}\right\}.
$$
Similar to the previous section, In order to use Theorem \ref{Theorem:BER}, we need to form the projection and distance functions of $\mathcal{V}^{\rm BR}$ which is straightforward for a box set.
Then, $\rm SEP$ can be calculated using \eqref{sep3}.
Here, unlike the $M$-PSK case, the SEP of the recovery is not the same for different symbols in $\mathcal{K}$; since an $M$-QAM constellation has different types of points, namely inner, edge and corner points.
\section{Numerical Results}
In \figref{fig:performance_Reg}, we plotted the MSE and SEP performances as functions of the regularizer $\zeta$ for a $16$-QAM constellation with BR. 
These figures verify the accuracy of the prediction of Theorem~\ref{Theorem:BER} when compared to Monte-Carlo (MC) simulations. 
In addition, from those figures we can see a clear minimum value of $\zeta$ that gives the best MSE/SEP performance, thus Theorem~1 can be used to select the optimum value of the regularizer.

Furthermore, \figref{fig:PSK MSE/SEP} verifies the accuracy of the MSE/SEP predictions of Theorem~\ref{Theorem:BER} as functions of the SNR, for a $16$-PSK modulation scheme with circular relaxation. Note that although the theorem requires $m,n \to \infty$, it can be observed that the prediction is already accurate for $n = 128$.
In this figure, we also plotted the unconstrained regularized least-squares (RLS) (without convex relaxation, i.e., $\mathcal{V} = \mathbb{R}$). Besides, it can be seen that the RCR detector outperforms the RLS.
 
Finally, under the box relaxation, we apply Theorem~\ref{Theorem:BER} to sharply characterize the MSE and SEP of a $16$-QAM modulated system as functions of the SNR. The result is shown in \figref{fig:MSE/SER:QAM}, which again illustrates the high accuracy of our results, as well as that RCR detector outperforms the RLS.
%
\begin{figure}[ht]
\begin{subfigure}[h]{.5\textwidth}%
  \centering
\input{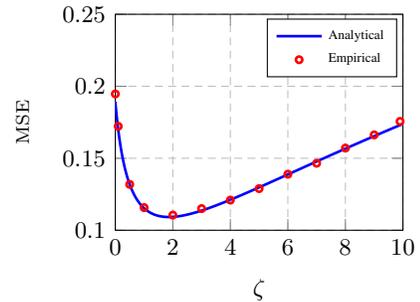}
\caption{\scriptsize{MSE performance vs. the regularizer.}}
\label{fig:mse_Reg}
\end{subfigure}
\begin{subfigure}[h]{.5\textwidth}%
  \centering
  \input{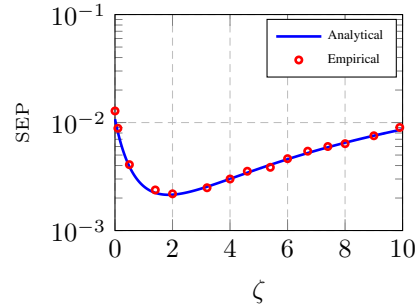}
\caption{\scriptsize{SEP performance vs. the regularizer.}}
\label{fig:ser_Reg}
\end{subfigure}
\caption{\scriptsize{MSE/SEP vs. the regularizer for $16$-QAM with BR, with $\kappa=1.5, n =128, \mathrm{SNR} = 15 \ \mathrm{dB}$. The data are averaged over $50$ independent MC trials.}}
\label{fig:performance_Reg}
\end{figure}
\begin{figure*}[ht]
\begin{subfigure}[h]{.5\textwidth}%
  \centering
%
%
\definecolor{mycolor1}{rgb}{0.00000,0.44700,0.74100}%
\definecolor{mycolor2}{rgb}{0.85000,0.32500,0.09800}%
\definecolor{OliveGreen}{rgb}{0,0.5,0}%
\begin{tikzpicture}[scale=1,font=\small]
    \renewcommand{\axisdefaulttryminticks}{4}
    \tikzstyle{every major grid}+=[style=densely dashed]
    \tikzstyle{every axis y label}+=[yshift=-10pt]
    \tikzstyle{every axis x label}+=[yshift=5pt]
    \tikzstyle{every axis legend}+=[cells={anchor=west},fill=white,
        at={(0.98,0.98)}, anchor=north east, font=\tiny ]
\begin{axis}[%
width=2.8in,
height=2.3in,
xmin=-5,
xmax=15,
ymin=0,
ymax=0.80,
      grid=major,
      scaled ticks=true,
       xlabel={\scriptsize{$\rm{SNR \ (dB)}$}},
       ylabel={\scriptsize{$\mathrm{MSE}$}},
]
\addplot [color=mycolor1, line width=1.0pt]
  table[row sep=crcr]{%
-5	0.652677123244914\\
-4	0.603639631403647\\
-3	0.552550426335933\\
-2	0.500096478149767\\
-1	0.447252297917171\\
0	0.395277598009866\\
1	0.345125548091542\\
2	0.297646438631453\\
3	0.253624264456372\\
4	0.213607565446537\\
5	0.178072500625444\\
6	0.146938936137843\\
7	0.120245773640159\\
8	0.097586564579163\\
9	0.0787359707403477\\
10	0.063198927417482\\
11	0.0505210432138252\\
12	0.0402818231694771\\
13	0.0320023340656993\\
14	0.0253941966530881\\
15	0.0201182484771332\\
};
\addlegendentry{RCR: Analytical}

\addplot [only marks, line width =1pt, mark size=1.2pt, mark=o, mark options={solid, red}]
  table[row sep=crcr]{%
-5	0.661436796312411\\
-4	0.607017657978226\\
-3	0.563714469918548\\
-2	0.489651675351948\\
-1	0.451542851494575\\
0	0.417934733963987\\
1	0.357049286821602\\
2	0.299246109076555\\
3	0.260044488643388\\
4	0.21382247180138\\
5	0.168618310629876\\
6	0.144223372896345\\
7	0.123399854325959\\
8	0.097517829210746\\
9	0.0797923451122138\\
10	0.0632680627370651\\
11	0.0496799290604995\\
12	0.041611234641351\\
13	0.0303312654175997\\
14	0.025308025716121\\
15	0.0209822167582728\\
};
\addlegendentry{RCR: Empirical}

\addplot [color=OliveGreen, dashed, line width=1.0pt]
  table[row sep=crcr]{%
-5	0.794643694390757\\
-4	0.748430516118669\\
-3	0.67678154989124\\
-2	0.591447685457408\\
-1	0.549159776920495\\
0	0.486100547920642\\
1	0.43123416571511\\
2	0.378583827896152\\
3	0.328167928123375\\
4	0.26312318400326\\
5	0.224495970734318\\
6	0.190198821228096\\
7	0.160496891708166\\
8	0.130905495479323\\
9	0.106711118723427\\
10	0.0914783105249661\\
11	0.0707287619054264\\
12	0.063523463562782\\
13	0.0468697132631553\\
14	0.036224038447243\\
15	0.0304366040353704\\
};
\addlegendentry{RLS: Empirical}

\end{axis}
\end{tikzpicture}%
\caption{\scriptsize{MSE performance.}}
\label{fig:PSK_MSE}
\end{subfigure}
\begin{subfigure}[h]{.5\textwidth}%
\centering
%
%
%
\definecolor{mycolor1}{rgb}{0.00000,0.44700,0.74100}%
\definecolor{mycolor2}{rgb}{0.85000,0.32500,0.09800}%
\definecolor{OliveGreen}{rgb}{0,0.5,0}%
\begin{tikzpicture}
 \renewcommand{\axisdefaulttryminticks}{4}
    \tikzstyle{every major grid}+=[style=densely dashed]
    \tikzstyle{every axis y label}+=[yshift=-10pt]
    \tikzstyle{every axis x label}+=[yshift=5pt]
    \tikzstyle{every axis legend}+=[cells={anchor=west},fill=white,
        at={(0.01,0.01)}, anchor=south west, font=\tiny ]

\begin{axis}[%
width=2.128in,
height=1.754in,
scale only axis,
xmin=-5,
xmax=15,
xlabel style={font=\color{white!15!black}},
xlabel={\scriptsize{$\rm{SNR \ (dB)}$}},
ymode=log,
ymin=0.01,
ymax=1,
yminorticks=true,
ylabel style={font=\color{white!15!black}},
ylabel={\scriptsize{$\rm{SEP}$}},
axis background/.style={fill=white},
title style={font=\bfseries},
 grid=major,
 scaled ticks=true,
]
\addplot [color=mycolor1, line width=1.0pt]
  table[row sep=crcr]{%
-5	0.7943\\
-4	0.7777\\
-3	0.75798\\
-2	0.7351\\
-1	0.7098\\
0	0.68087\\
1	0.64849\\
2	0.61233\\
3	0.57278\\
4	0.52858\\
5	0.48176\\
6	0.43019\\
7	0.37764\\
8	0.32292\\
9	0.26765\\
10	0.21471\\
11	0.16412\\
12	0.11841\\
13	0.07902\\
14	0.0491\\
15	0.02714\\
};
\addlegendentry{RCR: Analytical}

\addplot [only marks, line width =1pt, mark size=1.2pt, mark=o, mark options={solid, red}]
  table[row sep=crcr]{%
-5	0.80796875\\
-4	0.780625\\
-3	0.7684375\\
-2	0.755625\\
-1	0.7140625\\
0	0.67703125\\
1	0.65234375\\
2	0.61265625\\
3	0.5775\\
4	0.52015625\\
5	0.4890625\\
6	0.44671875\\
7	0.38234375\\
8	0.32265625\\
9	0.28390625\\
10	0.2284375\\
11	0.16921875\\
12	0.11859375\\
13	0.07671875\\
14	0.0509375\\
15	0.028125\\
};
\addlegendentry{RCR: Empirical}

\addplot [color=OliveGreen, dashed, line width=1.0pt]
  table[row sep=crcr]{%
-5	0.858203125\\
-4	0.848046875\\
-3	0.826953125\\
-2	0.80546875\\
-1	0.799609375\\
0	0.774609375\\
1	0.734765625\\
2	0.73671875\\
3	0.7\\
4	0.67421875\\
5	0.633984375\\
6	0.5703125\\
7	0.525\\
8	0.487890625\\
9	0.431640625\\
10	0.390234375\\
11	0.333203125\\
12	0.2796875\\
13	0.23125\\
14	0.16875\\
15	0.125390625\\
};
\addlegendentry{RLS: Empirical}

\end{axis}
\end{tikzpicture}%
\caption{\scriptsize{SEP performance.}}
\label{Fig:PSK_SEP}
\end{subfigure}
\caption{\scriptsize{Performance of the Circular Relaxation (CR) for $16$-PSK as a function of the SNR. The analytical curve is based on Theorem~1.  For the empirical simulations, we used $\kappa =2, n =128$ and data are averaged over $50$ independent MC iterations.}}
\label{fig:PSK MSE/SEP}
\end{figure*}
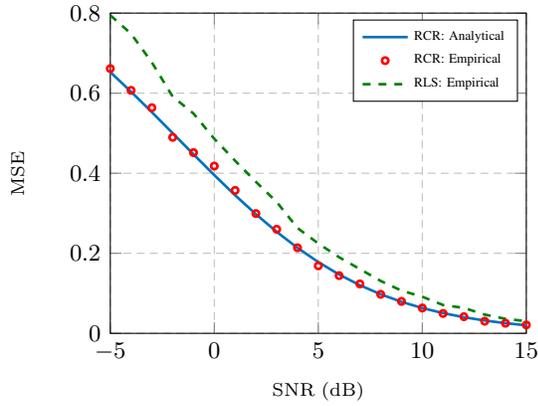
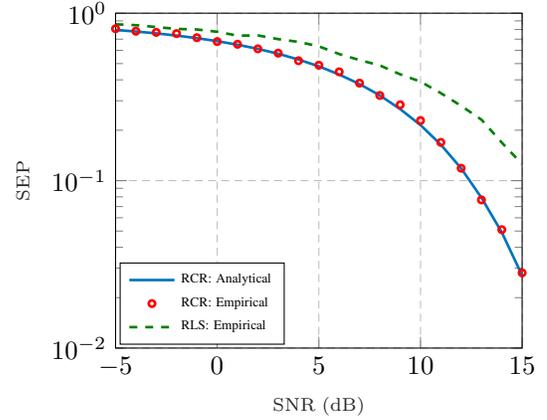
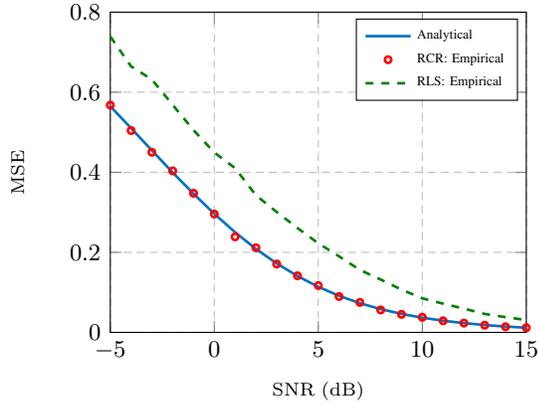
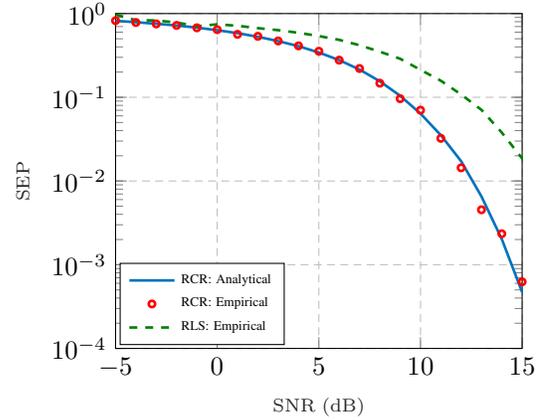
\begin{figure*}
\begin{subfigure}[h]{.5\textwidth}
  \centering
%
%
\definecolor{mycolor1}{rgb}{0.00000,0.44700,0.74100}%
\definecolor{mycolor2}{rgb}{0.85000,0.32500,0.09800}%
\definecolor{OliveGreen}{rgb}{0,0.5,0}%
%
\begin{tikzpicture}[scale=1,font=\small]
    \renewcommand{\axisdefaulttryminticks}{4}
    \tikzstyle{every major grid}+=[style=densely dashed]
    \tikzstyle{every axis y label}+=[yshift=-10pt]
    \tikzstyle{every axis x label}+=[yshift=5pt]
    \tikzstyle{every axis legend}+=[cells={anchor=west},fill=white,
        at={(0.98,0.98)}, anchor=north east, font=\tiny ]
\begin{axis}[%
width=2.8in,
height=2.3in,
xmin=-5,
xmax=15,
ymin=0,
ymax=0.80,
      grid=major,
      scaled ticks=true,
       xlabel={\scriptsize{$\rm{SNR \ (dB)}$}},
       ylabel={\scriptsize{$\mathrm{MSE}$}},
]
\addplot [color=mycolor1, line width=1.0pt]
  table[row sep=crcr]{%
-5	0.564844781908858\\
-4	0.510151459029343\\
-3	0.454549206071825\\
-2	0.399620955794549\\
-1	0.346540609000462\\
0	0.296339305438929\\
1	0.250152152983718\\
2	0.208612165184731\\
3	0.172271167869742\\
4	0.14074304008719\\
5	0.113847753345322\\
6	0.0913614567992313\\
7	0.0729769486029522\\
8	0.058128516655954\\
9	0.0462220174125751\\
10	0.0367001474138383\\
11	0.0291775390147349\\
12	0.0231739359843227\\
13	0.01838926131342\\
14	0.0146315246206278\\
15	0.01160586257321\\
};
\addlegendentry{Analytical}
\addplot [only marks, line width =1pt, mark size=1.2pt, mark=o, mark options={solid, red}]
  table[row sep=crcr]{%
-5	0.567712263567865\\
-4	0.50422681415786\\
-3	0.450105598791514\\
-2	0.403517762062243\\
-1	0.347623783379656\\
0	0.295613596227476\\
1	0.238704482698077\\
2	0.211476285287481\\
3	0.170704938396434\\
4	0.141371922485337\\
5	0.117003672313169\\
6	0.0896032866633698\\
7	0.0748922419932259\\
8	0.056083813855868\\
9	0.0450319577937263\\
10	0.0377422714338062\\
11	0.0289662181946408\\
12	0.0232580007017338\\
13	0.0179457361426157\\
14	0.0141332969299728\\
15	0.0116038427015183\\
};
\addlegendentry{RCR: Empirical}

\addplot [color=OliveGreen, dashed, line width=1.0pt]
  table[row sep=crcr]{%
-5	0.739206924111293\\
-4	0.664563902625815\\
-3	0.631759382485317\\
-2	0.568321701403432\\
-1	0.505503191029522\\
0	0.449377708026567\\
1	0.410406186255206\\
2	0.342559880932105\\
3	0.300805812482958\\
4	0.260683526468875\\
5	0.222604984086232\\
6	0.189896270246394\\
7	0.15602093801618\\
8	0.131983829406063\\
9	0.106773935836679\\
10	0.0851787612171377\\
11	0.071804174166673\\
12	0.0593592420844816\\
13	0.0460435948010212\\
14	0.0380789016062059\\
15	0.0309214742449476\\
};
\addlegendentry{RLS: Empirical}
\end{axis}
\end{tikzpicture}%
\caption{\scriptsize{MSE performance vs. SNR.}}
\label{fig:MSE_QAM}
\end{subfigure}
\begin{subfigure}[h]{.5\textwidth}
\centering
%
%
\definecolor{mycolor1}{rgb}{0.00000,0.44700,0.74100}%
\definecolor{mycolor2}{rgb}{0.85000,0.32500,0.09800}%
\definecolor{OliveGreen}{rgb}{0,0.5,0}%
\begin{tikzpicture}
 \renewcommand{\axisdefaulttryminticks}{4}
    \tikzstyle{every major grid}+=[style=densely dashed]
    \tikzstyle{every axis y label}+=[yshift=-10pt]
    \tikzstyle{every axis x label}+=[yshift=5pt]
    \tikzstyle{every axis legend}+=[cells={anchor=west},fill=white,
        at={(0.01,0.01)}, anchor=south west, font=\tiny ]
\begin{axis}[%
width=2.128in,
height=1.754in,
scale only axis,
xmin=-5,
xmax=15,
xlabel style={font=\color{white!15!black}},
xlabel={\scriptsize{$\rm{SNR \ (dB)}$}},
ymode=log,
ymin=0.0001,
ymax=1,
yminorticks=true,
ylabel style={font=\color{white!15!black}},
ylabel={\scriptsize{$\rm{SEP}$}},
 grid=major,
 scaled ticks=true,
]
\addplot [color=mycolor1, line width=1.0pt]
  table[row sep=crcr]{%
-5	0.81856\\
-4	0.78932\\
-3	0.75547\\
-2	0.72095\\
-1	0.67959\\
0	0.63296\\
1	0.58357\\
2	0.52834\\
3	0.46894\\
4	0.40787\\
5	0.34267\\
6	0.27822\\
7	0.21455\\
8	0.15522\\
9	0.10414\\
10	0.06386\\
11	0.03527\\
12	0.01723\\
13	0.00653\\
14	0.00202\\
15	0.00047\\
};
\addlegendentry{RCR: Analytical}

\addplot [only marks, line width =1pt, mark size=1.2pt, mark=o, mark options={solid, red}]
  table[row sep=crcr]{%
-5	0.8225\\
-4	0.7825\\
-3	0.7528125\\
-2	0.7221875\\
-1	0.67609375\\
0	0.64203125\\
1	0.56484375\\
2	0.53484375\\
3	0.47078125\\
4	0.4115625\\
5	0.355\\
6	0.2778125\\
7	0.220625\\
8	0.14734375\\
9	0.09625\\
10	0.07015625\\
11	0.03234375\\
12	0.014375\\
13	0.00453125\\
14	0.00234375\\
15	0.000625\\
};
\addlegendentry{RCR: Empirical}

\addplot [color=OliveGreen, dashed, line width=1.0pt]
  table[row sep=crcr]{%
-5	0.9725\\
-4	0.8325\\
-3	0.828125\\
-2	0.7821875\\
-1	0.70609375\\
0	0.743828125\\
1	0.715234375\\
2	0.669765625\\
3	0.63359375\\
4	0.5890625\\
5	0.538671875\\
6	0.48546875\\
7	0.41859375\\
8	0.35078125\\
9	0.2884375\\
10	0.2128125\\
11	0.157734375\\
12	0.10796875\\
13	0.070703125\\
14	0.037890625\\
15	0.01859375\\
16	0.006328125\\
17	0.00140625\\
18	0.000703125\\
19	0.000234375\\
20	0\\
21	0\\
22	0\\
23	0\\
24	0\\
25	0\\
};
\addlegendentry{RLS: Empirical}

\end{axis}
\end{tikzpicture}%
\caption{\scriptsize{SEP performance vs. SNR.}}
\label{fig:SER_QAM}
\end{subfigure}
\caption{\scriptsize{Box Relaxation performance for $16$-QAM. The analytical prediction is based on Theorem~1. We used $\kappa =2, n =128$ and data are averaged over $50$ independent MC trials.}}
\label{fig:MSE/SER:QAM}
\end{figure*}
\section{Sketch of the Proof}
\label{Proof}
In this section, we provide a proof sketch of Theorem \ref{Theorem:BER}. For the reader's convenient,  we summarize the main tool of our analysis, namely the CGMT, in the next subsection.
\subsection{Analysis Tool: CGMT}
We first need to state the key ingredient of the analysis which is the CGMT. Here, we just recall the statement of the theorem, and we refer the reader to \cite{thrampoulidis2016precise} for the complete technical requirements.
Consider the following two min-max problems, which we refer to as the Primal Optimization (PO) and the Auxiliary Optimization (AO) problems:
\begin{subequations}
\begin{align}\label{P,AO}
&\Psi(\Cm) := \underset{\av \in \mathcal{S}_{\av}}{\operatorname{\min}}  \ \underset{\bv \in \mathcal{S}_{\bv}}{\operatorname{\max}} \ \bv^{\top} \Cm \av + \mathcal{T}( \av, \bv), \\
&\psi(\gv_1, \gv_2) := \underset{\av \in \mathcal{S}_{\av}}{\operatorname{\min}}  \ \underset{\bv \in \mathcal{S}_{\bv}}{\operatorname{\max}} \ \| \av \| \gv_1^{\top} \bv + \| \bv \| \gv_2^{\top} \av + \mathcal{T}( \av, \bv), \label{AA2}
\end{align}
\end{subequations}
where $\Cm \in \mathbb{R}^{\tilde m \times \tilde n}, \gv_1 \in \mathbb{R}^{\tilde m}, \gv_2 \in \mathbb{R}^{\tilde n}, \mathcal{S}_\av \subset \mathbb{R}^{\tilde n}, \mathcal{S}_\bv \subset \mathbb{R}^{\tilde m}$ and $\mathcal{T}: \mathbb{R}^{\tilde n} \times \mathbb{R}^{\tilde m} \mapsto \mathbb{R}$. Moreover, the function $\mathcal{T}$ is assumed to be independent of the matrix $\Cm$. Denote by $\av_{\Psi} := \av_{\Psi}(\Cm) $, and $\av_{\psi} := \av_{\psi}( \gv_1, \gv_2)$ any optimal minimizers of (\ref{P,AO}) and (\ref{AA2}), respectively. Further let $\mathcal{S}_\av, \mathcal{S}_\bv$ be convex and compact sets, $\mathcal{T}(\av,\bv)$ is convex-concave continuous on $\mathcal{S}_\av \times \mathcal{S}_\bv$, and $\Cm, \gv_1$ and $\gv_2 $ all have i.i.d. standard normal entries.

The equivalence between the PO and AO is formally described in the following theorem, the proof of which can be found in \cite{thrampoulidis2016precise}.
\begin{theorem}
Let $\mathcal{S}$ be any arbitrary open subset of $\mathcal{S}_\av $, and $\mathcal{S}^c = \mathcal{S}_\av \setminus\mathcal{S}$. Denote $\psi_{\mathcal{S}^c}(\gv_1,\gv_2)$ the optimal cost of the optimization in (\ref{AA2}), when the minimization over $\av$ is constrained over $\av \in \mathcal{S}^c$. 
 Suppose that there exist constants $\eta < \delta$,  such that 
$\psi(\gv_1,\gv_2) \overset{P}{\longrightarrow} \eta$,  and 
 $\psi_{\mathcal{S}^c}(\gv_1,\gv_2) \overset{P}{\longrightarrow} \delta$.\\
Then, $\lim_{\tilde n \rightarrow \infty} \mathbb{P}[\av_{\psi} \in \mathcal{S}] = 1,$ and 
$\lim_{\tilde n \rightarrow \infty} \mathbb{P}[\av_{\Psi} \in \mathcal{S}] = 1.$
\end{theorem}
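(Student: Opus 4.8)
The plan is to derive the statement from Gordon's Gaussian comparison inequality, the probabilistic engine behind the CGMT, while separating the argument for the auxiliary problem (AO) from that for the primal problem (PO). The AO part is essentially deterministic once the two convergences are in hand. First I would note that restricting the inner minimization of \eqref{AA2} to a subset can only raise the optimal cost, so $\psi_{\mathcal{S}^c}(\gv_1,\gv_2) \ge \psi(\gv_1,\gv_2)$ always, and the event $\{\av_\psi \in \mathcal{S}^c\}$ forces $\psi_{\mathcal{S}^c}(\gv_1,\gv_2) = \psi(\gv_1,\gv_2)$ (the global minimizer is attained on $\mathcal{S}^c$). Fixing any $t$ with $\eta < t < \delta$, the hypotheses $\psi \pto \eta$ and $\psi_{\mathcal{S}^c} \pto \delta$ guarantee that, with probability tending to one, $\psi(\gv_1,\gv_2) < t < \psi_{\mathcal{S}^c}(\gv_1,\gv_2)$ simultaneously; on this event the two costs differ, hence $\av_\psi \notin \mathcal{S}^c$. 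This yields $\mathbb{P}[\av_\psi \in \mathcal{S}] \to 1$.

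For the primal problem I would invoke two forms of the Gaussian min-max comparison relating the costs of \eqref{P,AO} and \eqref{AA2}, both imported from \cite{thrampoulidis2016precise}. The first holds for \emph{arbitrary} compact feasibility sets: for every $c \in \mathbb{R}$,
\begin{equation}
\mathbb{P}\big[\Psi(\Cm) < c\big] \le 2\,\mathbb{P}\big[\psi(\gv_1,\gv_2) \le c\big].
\end{equation}
The second requires $\mathcal{S}_\av,\mathcal{S}_\bv$ convex and $\mathcal{T}$ convex-concave: for every $c$,
\begin{equation}
\mathbb{P}\big[\Psi(\Cm) > c\big] \le 2\,\mathbb{P}\big[\psi(\gv_1,\gv_2) \ge c\big].
\end{equation}
The decisive bookkeeping is that the convexity-based inequality will only ever be applied to the full convex set $\mathcal{S}_\av$, while the generic inequality will absorb the possibly non-convex restriction to $\mathcal{S}^c$.

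Concretely, let $\Psi_{\mathcal{S}^c}(\Cm)$ denote the cost of \eqref{P,AO} with the minimization restricted to $\av \in \mathcal{S}^c$, so that $\{\av_\Psi \in \mathcal{S}^c\} \subseteq \{\Psi_{\mathcal{S}^c}(\Cm) = \Psi(\Cm)\}$. Choose $\eta < t_1 < t_2 < \delta$. Applying the convexity inequality to the unconstrained PO gives $\mathbb{P}[\Psi(\Cm) > t_1] \le 2\,\mathbb{P}[\psi(\gv_1,\gv_2) \ge t_1] \to 0$ because $\psi \pto \eta < t_1$, so $\Psi(\Cm) \le t_1$ with high probability. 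Applying the generic inequality to the $\mathcal{S}^c$-restricted problem (legitimate since $\mathcal{S}^c$ is compact and no convexity is needed) gives $\mathbb{P}[\Psi_{\mathcal{S}^c}(\Cm) < t_2] \le 2\,\mathbb{P}[\psi_{\mathcal{S}^c}(\gv_1,\gv_2) \le t_2] \to 0$ because $\psi_{\mathcal{S}^c} \pto \delta > t_2$, so $\Psi_{\mathcal{S}^c}(\Cm) \ge t_2$ with high probability. On the intersection of these two high-probability events we get $\Psi_{\mathcal{S}^c}(\Cm) \ge t_2 > t_1 \ge \Psi(\Cm)$, so the restricted cost strictly exceeds the global one, forcing $\av_\Psi \notin \mathcal{S}^c$. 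Hence $\mathbb{P}[\av_\Psi \in \mathcal{S}] \to 1$, completing the argument.

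The hard part will be the asymmetric handling of convexity, which is what makes this more than routine: the upper-tail control of $\Psi$ needed in step (A) is only available on convex sets, so one must be sure to sandwich using the \emph{full} set $\mathcal{S}_\av$ for the small-cost bound and the convexity-free inequality for the large-cost bound on $\mathcal{S}^c$. The genuinely deep analytic input is the passage from Gordon's one-sided Gaussian process comparison to these two inequalities; the convex version in particular requires a minimax exchange (via Sion's theorem) to upgrade a lower-tail comparison into control of the upper tail of $\Psi$. I would treat both inequalities as established in \cite{thrampoulidis2016precise} and instead spend the care on verifying their hypotheses here, namely that $\mathcal{S}^c$ is compact (a closed subset of the compact $\mathcal{S}_\av$) and that restricting only the $\av$-set leaves the concave inner structure in $\bv$ and the $\Cm$-independence of $\mathcal{T}$ intact, so that the comparison applies verbatim to $\Psi_{\mathcal{S}^c}$ and $\psi_{\mathcal{S}^c}$.
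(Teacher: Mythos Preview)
Your proposal is correct and follows exactly the standard CGMT argument from \cite{thrampoulidis2016precise}; however, the paper does not actually prove this theorem---it merely states it and defers the proof to \cite{thrampoulidis2016precise}. So in fact you have supplied more than the paper does: your outline (the elementary separation argument for the AO, and the asymmetric use of the two Gordon inequalities---the convex one on $\mathcal{S}_\av$ for the upper tail of $\Psi$, the generic one on $\mathcal{S}^c$ for the lower tail of $\Psi_{\mathcal{S}^c}$---to sandwich the PO) is precisely the proof given in the cited reference.
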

\subsection{Asymptotic Analysis}
In this part, we provide an outline of the ideas used to prove our main results based on the CGMT framework.
We start by rewriting \eqref{a} by a change of variable to the error vector $\wv := \sv - \sv_0$, to get:
\begin{align}\label{w1}
\hat{\wv}:= \arg \min_{\wv \in \mathcal{V}^n -\sv_0} \frac{1}{2}\|  \Hm \wv-\vv \|^2 + \frac{\zeta}{2} \| \wv + \sv_0 \|^2.
\end{align}
Next, let $\widetilde{\Hm}\! \!=\!\! \!  \left[\begin{matrix}
  \Hm_{\mathsmaller{R}} & -\Hm_{\mathsmaller{I}} \\
  \Hm_{\mathsmaller{I}} & \Hm_{\mathsmaller{R}}
  \end{matrix} \!\right]\! \!\!\in\! \mathbb{R}^{2m \times 2n}$, and $\widetilde{\vv}  \! \!=\! \!  \left[\begin{matrix}
  \vv_{\mathsmaller{R}} \\
  \vv_{\mathsmaller{I}} 
  \end{matrix}\!\right]\! \! \!\in\! \mathbb{R}^{2m}$, where $\Hm_{\mathsmaller{R}}, \vv_{\mathsmaller{R}}$ ($\Hm_{\mathsmaller{I}}, \vv_{\mathsmaller{I}}$) are the real (imaginary) parts of $\Hm$ and $\vv$, respectively. With this, and normalizing \eqref{w1} by $1/n$ we get
  \begin{align}\label{w11}
\widehat{\wv}:= \argmin_{\substack{\widetilde\wv \in \mathbb{R}^{2n}\\ \widetilde{w}_i + j \widetilde{w}_{i+n} \in \mathcal{V}-s_{0,i}}} \frac{1}{2n}\| \frac{1}{\sqrt{2n}} \widetilde\Hm \widetilde\wv-\widetilde\vv  \|^2 +  \frac{\zeta}{2n}\| \widetilde\wv + \widetilde\sv_0 \|^2,
\end{align}
where $\widetilde{\sv}_0 \!=\!  \left[\begin{matrix}
  \sv_{0,{\mathsmaller{R}}} \\
  \sv_{0,{\mathsmaller{I}}} 
  \end{matrix}\right]\!\! \!\in\! \mathbb{R}^{2n}$.
Because of the dependence between the entries of $\widetilde{\Hm}$, 
the above optimization is difficult to analyze and the CGMT framework cannot be used directly here. However, as discussed in \cite{abbasi2019performance}, one can use Lindeberg methods as in \cite{oymak2018universality} to replace $\widetilde{\Hm}$ with a Gaussian matrix that has i.i.d. entries without affecting the asymptotic performance, then we get
 \begin{align}\label{w2}
\widehat{\wv}= \argmin_{\substack{\widetilde\wv \in \mathbb{R}^{2n}\\ \widetilde{w}_i + j \widetilde{w}_{i+n} \in \mathcal{V}-s_{0,i}}} \frac{1}{2n}\| \frac{1}{\sqrt{2n}}\Am \widetilde\wv -  \widetilde\vv \|^2 +  \frac{\zeta}{2n}\| \widetilde\wv + \widetilde\sv_0 \|^2,
\end{align}
\noindent
where $\Am \in \mathbb{R}^{2m \times 2n}$ has i.i.d. $\mathcal{N}(0,1)$ entries, and $\widetilde \vv$ has i.i.d. $\mathcal{N}(0,\frac{\sigma^2}{2})$ elements.
Next, we proceed to apply the CGMT by rewriting \eqref{w2} as the following min-max optimization:
\begin{align}\label{PO}
\min_{\substack{\widetilde\wv \in \mathbb{R}^{2n}\\ \widetilde{w}_i + j \widetilde{w}_{i+n} \in \mathcal{V}-s_{0,i}}}\! \! \max_{\uv \in \mathbb{R}^{2m}} \  \frac{\uv^{\top} \Am \widetilde \wv}{2n \sqrt{2n}} -\! \frac{\uv^{\top} \widetilde \vv}{2n} \!-\! \frac{\| \uv\|^2}{8n} \!+\!  \frac{\zeta}{2n}\| \widetilde\wv \!+\! \widetilde\sv_0 \|^2.
\end{align}
One technical requirement of the CGMT is the compactness of the feasibility set over $\uv$. This can be handled according to the approach in \cite[Appendix A]{thrampoulidis2016precise}, by introducing a sufficiently large {artificial} constraint set $\mathcal{S}_\uv$ which will not affect the optimization problem with high probability to obtain:
\begin{align}\label{PO2}
\min_{\substack{\widetilde\wv \in \mathbb{R}^{2n}\\  \widetilde{w}_i + j \widetilde{w}_{i+n} \in \mathcal{V}- s_{0,i}}}\! \! \max_{\uv \in \mathcal{S}_\uv} \ \frac{\uv^{\top} \Am \widetilde \wv}{2n \sqrt{2n}} -\! \frac{\uv^{\top} \widetilde \vv}{2n} \!-\! \frac{\| \uv\|^2}{8n} \!+\!  \frac{\zeta}{2n}\| \widetilde\wv \!+\! \widetilde\sv_0 \|^2.
\end{align}
The above problem is in the form of a PO of the CGMT, hence we can associate to it the following simplified AO optimization problem:
\begin{align}\label{AO}
\min_{\substack{\widetilde\wv \in \mathbb{R}^{2n}\\ \widetilde{w}_i + j \widetilde{w}_{i+n} \in \mathcal{V}-s_{0,i}}}& \max_{\uv \in \mathcal{S}_\uv} \   \frac{ \|\widetilde\wv\| \gv_1^{\top} \uv}{2n \sqrt{2n}} + \frac{ \|\uv\| \gv_2^{\top} \widetilde\wv}{2n \sqrt{2n}}
\nonumber\\
&- \frac{\uv^{\top} \widetilde \vv}{2n}  - \frac{\| \uv\|^2}{8n}+  \frac{\zeta}{2n}\| \widetilde\wv + \widetilde\sv_0 \|^2,
\end{align}
where $\gv_1 \in \mathbb{R}^{2m}$ and $\gv_2 \in \mathbb{R}^{2n}$ have i.i.d. $\mathcal{N}(0,1)$ entries.
With some abuse of notation on $\gv_1$, we can see that $\left(\frac{\|\widetilde \wv\|}{\sqrt{2n}} \gv_1 - \widetilde \vv \right)^{\top} \uv \ {\overset{d} =} \  \gv_1^{\top} \uv \ \sqrt{\frac{\|\widetilde \wv\|^2}{{2n}} +\frac{\sigma^2}{2}} $. Hence, \eqref{AO} becomes
\begin{align}\label{AO1}
\min_{\substack{\widetilde\wv \in \mathbb{R}^{2n}\\ \widetilde{w}_i + j \widetilde{w}_{i+n}  \in \mathcal{V}-s_{0,i}}}& \max_{\uv \in \mathcal{S}_\uv} \gv_1^{\top} \uv \sqrt{\frac{\|\widetilde \wv\|^2}{{2n}} +\frac{\sigma^2}{2}} + \frac{ \|\uv\| \gv_2^{\top} \widetilde\wv}{2n \sqrt{2n}} \nonumber\\
&- \frac{\| \uv\|^2}{8n}+  \frac{\zeta}{2n}\| \widetilde\wv + \widetilde\sv_0 \|^2.
\end{align}
Fixing $\beta: = \frac{\|\uv\|}{\sqrt{2n}}$, the optimization over $\uv$ simplifies to
\begin{align}\label{AO2}
\min_{\substack{\widetilde\wv \in \mathbb{R}^{2n}\\ \widetilde{w}_i + j \widetilde{w}_{i+n}  \in \mathcal{V} - s_{0,i}}} &\max_{\beta>0} \ \frac{\beta \|\gv_1\|}{\sqrt{2n}} \sqrt{\frac{\|\widetilde \wv\|^2}{{2n}} +\frac{\sigma^2}{2}} \nonumber\\
&+ \frac{ \beta\gv_2^{\top} \widetilde\wv}{2n} - \frac{\beta^2}{4}+  \frac{\zeta}{2n}\| \widetilde\wv + \widetilde\sv_0 \|^2.
\end{align}
The square root in the above problem can be expressed using the following identity\footnote{Note that at optimality, $ \tau_* = {\chi}$.}
\begin{align}
{\chi} = \min_{\tau>0}\frac{1}{2} \left( \frac{\chi^2}{ \tau} + {\tau}  \right), \quad \quad  \text{for} \ \chi>0,
\end{align}
which yields the following optimization problem
\begin{align}\label{AO3}
&\min_{\tau>0} \  \max_{\beta>0} \ \frac{\tau \beta \|\gv_1\|}{2\sqrt{2n}} +\frac{\sigma^2\|\gv_1\| \beta}{4 \tau \sqrt{2n}} - \frac{\beta^2}{4}\nonumber\\
&+\min_{\substack{\widetilde\wv \in \mathbb{R}^{2n}\\ \widetilde{w}_i + j \widetilde{w}_{i+n}  \in \mathcal{V} - s_{0,i}}}  \frac{ \beta \|\gv_1\|}{2\tau \sqrt{2n}} \frac{\|\widetilde \wv\|^2}{2n}+\frac{ \beta\gv_2^{\top} \widetilde\wv}{2n} + \frac{\zeta}{2n}\| \widetilde\wv + \widetilde\sv_0 \|^2.
\end{align}
Using the weak law of large numbers (WLLN): $\frac{\|\gv_1\|}{\sqrt{2n}} \pto \sqrt{\kappa}$, then the above problem reduces to
\begin{align}\label{AO4}
&\min_{\tau>0} \  \max_{\beta>0}  \ \ \frac{\tau \beta \sqrt{\kappa}}{2} +\frac{\sigma^2 \beta \sqrt{\kappa}}{4 \tau} - \frac{\beta^2}{4}
\nonumber\\
&+\min_{\substack{\widetilde\wv \in \mathbb{R}^{2n}\\  \widetilde{w}_i + j \widetilde{w}_{i+n} \in \mathcal{V} - s_{0,i}}}  \frac{ \beta \sqrt{\kappa}}{2\tau } \frac{\|\widetilde \wv\|^2}{2n}+\frac{ \beta\gv_2^{\top} \widetilde\wv}{2n} + \frac{\zeta}{2n}\| \widetilde\wv + \widetilde\sv_0 \|^2.
\end{align}
Defining $\alpha:= \frac{\tau}{\sqrt{\kappa}}$, and by a completion of squares in the minimization over $\widetilde \wv$, and using the WLLN, we obtain the following scalar (deterministic) optimization problem
\begin{align}\label{AO5}
\min_{\alpha>0}& \max_{\beta>0} \frac{\alpha \beta {\kappa}}{2} +\frac{\sigma^2 \beta }{4 \alpha} - \frac{\beta^2}{4}
- \frac{\beta^2}{\frac{2 \beta}{\alpha} +4 \zeta} +  \frac{1}{2} \left( \frac{\beta}{2 \alpha} - \frac{\frac{\beta^2}{4 \alpha^2}}{\frac{\beta}{2\alpha}+\zeta} \right)\nonumber \\
& +\frac{1}{2} \left( \frac{\beta}{2 \alpha} +\zeta \right) \mathbb{E}\left[ {\mathcal{D}}^2\left(\frac{\frac{\beta}{2 \alpha}}{\frac{\beta}{2 \alpha}+\zeta} S_0 - \frac{\beta}{\frac{\beta}{ \alpha}+2\zeta} G_c; \mathcal{V}\right) \right].
\end{align}
The $\rm SEP$ of $\widehat{\wv}$ in \eqref{w2} can be derived in a similar way to the proof of \cite{abbasi2019performance} to get
\begin{equation}\label{sepf}
{\rm SEP} \!\pto\! \mathbb{P} \!\left[ {\mathbf \Pi} \! \left(\frac{\frac{\beta_*}{2 \alpha_*}}{\frac{\beta_*}{2 \alpha_*} \!+ \!\zeta} S_0 \!-\! \frac{\beta_*}{\frac{\beta_*}{ \alpha_*}+2\zeta} G_c;\! \mathcal{V} \! \right) \!\notin\! \mathcal{V}_{S_0} \! \right]\!,
\end{equation}
simplifying \eqref{AO5} and \eqref{sepf} concludes the proof of the SEP part of Theorem~1.

The MSE expression can be proven in a similar way by noting that 
\begin{align}
\frac{\|\widetilde \wv \|^2}{2n}+ \frac{\sigma^2}{2} = \hat{\tau}_n^2,
\end{align}
where $\hat{\tau}_n$ is the solution to \eqref{AO4}. Hence, using $\hat\alpha_n = \frac{\hat\tau_n}{\sqrt{\kappa}}$, and $\hat\alpha_n \pto \alpha_*$, where $\alpha_*$ is the solution of \eqref{AO5}, we conclude, by applying the CGMT, that
\begin{align}
\frac{\|\hat\wv \|^2}{n} \pto 2 \kappa \alpha_*^2 -\sigma^2,
\end{align}
which completes the proof of the MSE part of Theorem~1.
\section{Conclusion}
In this article, we provided sharp performance analysis of the regularized convex relaxation detector when used in complex-valued data detection. 
In particular, we studied its MSE and SEP performance in a massive MIMO application with arbitrary constellation schemes such as QAM and PSK.
Numerical simulations show a close match to the obtained asymptotic results. 
In addition, the derived results can be used to optimally select the detector's parameters such as the regularization factor.
Furthermore, we showed that this convex relaxation outperforms the unconstrained RLS.
\bibliographystyle{IEEEbib}
\bibliography{References}
\end{document}